\newtheorem{theorem}{Theorem}[section]
\newtheorem{corollary}[theorem]{Corollary}
\newtheorem{lemma}[theorem]{Lemma}
\newtheorem{proposition}[theorem]{Proposition}
\newtheorem{defin}[theorem]{Definition}
\newenvironment{definition}{\begin{defin}\normalfont\quad}{\end{defin}}
\newtheorem{examp}[theorem]{Example}
\newtheorem{rema}[theorem]{Remark}
\newtheorem{prob}[theorem]{Problem}
\numberwithin{equation}{section}
\newcommand{\bt}{\begin{thm}}
\newcommand{\et}{\end{thm}}
\newcommand{\bp}{\begin{proof}}
\newcommand{\ep}{\end{proof}}
\newcommand{\bprop}{\begin{prop}}
\newcommand{\eprop}{\end{prop}}
\newcommand{\bl}{\begin{lemma}}
\newcommand{\el}{\end{lemma}}
\newcommand{\bc}{\begin{corollary}}
\newcommand{\ec}{\end{corollary}}
\newcommand{\Z}{\mathbb{Z}}
\newcommand{\be}{\begin{enumerate}}
\newcommand{\ee}{\end{enumerate}}
\title{MMH$^*$ with arbitrary modulus is always almost-universal}
\author{Khodakhast Bibak \thanks{Department of Computer Science, University of Victoria, Victoria, BC, Canada V8W 3P6. Email: {\tt
\{kbibak,bmkapron,srinivas\}@uvic.ca}} \and Bruce M. Kapron \footnotemark[1] \and Venkatesh Srinivasan \footnotemark[1] \thanks{Centre for Quantum Technologies, National University of Singapore, Singapore 117543.}}
\begin{document}

\maketitle

\begin{abstract}
Universal hash functions, discovered by Carter and Wegman in 1979, are of great importance in computer science with many applications. MMH$^*$ is a well-known $\triangle$-universal hash function family, based on the evaluation of a dot product modulo a prime. In this paper, we introduce a generalization of MMH$^*$, that we call GMMH$^*$, using the same construction as MMH$^*$ but with an arbitrary integer modulus $n>1$, and show that GMMH$^*$ is $\frac{1}{p}$-almost-$\triangle$-universal, where $p$ is the smallest prime divisor of $n$. This bound is tight.
\end{abstract}


\section{MMH$^*$}\label{Sec 1}

Universal hashing, introduced by Carter and Wegman \cite{CW}, is of great importance in computer science with many applications. Cryptography, information security, complexity theory, randomized algorithms, and data structures are just a few areas that universal hash functions and their variants have been used as a fundamental tool. In \cite{HK}, definitions of various kinds of universal hash functions gathered from the literature are presented; we mention some of them here.

\begin{definition} \label{def:Uni hash}
Let $H$ be a family of functions from a domain $D$ to a range $R$. Let $\varepsilon$ be a constant such that $\frac{1}{|R|} \leq \varepsilon < 1$. The probabilities below are taken over the random choice of hash function $h$ from the set $H$.

\noindent\textit{(i)} The family $H$ is a {\it universal family of hash functions} if for any two distinct $x,y\in D$, we have $\text{Pr}_{h \leftarrow H}[h(x) = h(y)] \leq \frac{1}{|R|}$. Also, $H$ is an {\it $\varepsilon$-almost-universal} ($\varepsilon$-AU) {\it family of hash functions} if for any two distinct $x,y\in D$, we have $\text{Pr}_{h \leftarrow H}[h(x) = h(y)] \leq \varepsilon$.

\noindent\textit{(ii)} Suppose $R$ is an Abelian group. The family $H$ is a $\triangle$-{\it universal family of hash functions} if for any two distinct $x,y\in D$, and all $b\in R$, we have $\text{Pr}_{h \leftarrow H}[h(x) - h(y) = b] = \frac{1}{|R|}$, where ` $-$ ' denotes the group subtraction operation. Also, $H$ is an {\it $\varepsilon$-almost-$\triangle$-universal} ($\varepsilon$-A$\triangle$U) {\it family of hash functions} if for any two distinct $x,y\in D$, and all $b\in R$, we have $\text{Pr}_{h \leftarrow H}[h(x) - h(y) = b] \leq \varepsilon$.
\end{definition}

It is worth mentioning that $\varepsilon$-A$\triangle$U families have also important applications in computer science, in particular, in cryptography. For example, these families can be used in message authentication. Informally, it is possible to design a message authentication scheme using $\varepsilon$-A$\triangle$U families such that two parties can exchange signed messages over an unreliable channel and the probability that an adversary can forge a valid signed message to be sent across the channel is at most $\varepsilon$ (\cite{HK}).

The following family, named MMH$^*$ by Halevi and Krawczyk \cite{HK} in 1997, is a well-known $\triangle$-universal hash function family.

\begin{definition}\label{def:MMH$^*$}
Let $p$ be a prime and $k$ be a positive integer. The family MMH$^*$ is defined as follows:
\begin{align}\label{MMH*}
\text{MMH}^*:=\lbrace g_{\mathbf{x}} \; : \; \mathbb{Z}_p^k \rightarrow \mathbb{Z}_p \; | \; \mathbf{x}\in \mathbb{Z}_p^k \rbrace,
\end{align}
where
\begin{align}\label{MMH* for 2}
g_{\mathbf{x}}(\mathbf{m}) := \mathbf{m} \cdot \mathbf{x} \pmod{p} = \sum_{i=1}^k m_ix_i \pmod{p},
\end{align}
for any $\mathbf{x}=\langle x_1, \ldots, x_k \rangle \in \mathbb{Z}_p^k$, and any $\mathbf{m}=\langle m_1, \ldots, m_k \rangle \in \mathbb{Z}_p^k$.
\end{definition}

It appears that Gilbert, MacWilliams, and Sloane \cite{GMS} first discovered MMH$^*$ (but in the finite geometry setting). However, many resources attribute MMH$^*$ to Carter and Wegman \cite{CW}. Halevi and Krawczyk \cite{HK} proved that MMH$^*$ is a $\triangle$-universal family of hash functions. We also remark that, recently, Leiserson et al. \cite{LSS} rediscovered MMH$^*$ (called it ``DOTMIX compression function family") and using the same method as Halevi and Krawczyk, proved that DOTMIX is $\triangle$-universal. They then apply this result to the problem of deterministic parallel random-number generation for dynamic multithreading platforms in parallel computing.

\begin{theorem} \label{thm:MMH* UNI}
The family \textnormal{MMH}$^*$ is a $\triangle$-universal family of hash functions.
\end{theorem}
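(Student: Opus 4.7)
The plan is to reduce the $\triangle$-universal condition to a counting problem for a single linear equation over $\mathbb{Z}_p$, and then exploit the fact that $\mathbb{Z}_p$ is a field to solve that equation explicitly in one variable.

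First, I would fix two distinct inputs $\mathbf{m}, \mathbf{m}' \in \mathbb{Z}_p^k$ and any $b \in \mathbb{Z}_p$, set $a_i := m_i - m_i' \pmod{p}$, and write
\begin{align*}
g_{\mathbf{x}}(\mathbf{m}) - g_{\mathbf{x}}(\mathbf{m}') \equiv \sum_{i=1}^k a_i x_i \pmod{p}.
\end{align*}
So the probability in question is just the fraction of $\mathbf{x} \in \mathbb{Z}_p^k$ satisfying $\sum_{i=1}^k a_i x_i \equiv b \pmod p$. Since $\mathbf{m} \neq \mathbf{m}'$, at least one coordinate $a_j$ is nonzero in $\mathbb{Z}_p$; fix such a $j$.

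Next I would freely choose the remaining $k-1$ coordinates $\{x_i\}_{i \neq j}$ arbitrarily in $\mathbb{Z}_p$, giving $p^{k-1}$ choices. For each such choice, the equation becomes a single linear equation in $x_j$, namely $a_j x_j \equiv b - \sum_{i \neq j} a_i x_i \pmod{p}$. Because $p$ is prime and $a_j \not\equiv 0 \pmod p$, $a_j$ is invertible in $\mathbb{Z}_p$, so this equation has a unique solution $x_j \in \mathbb{Z}_p$. Hence the total number of solutions $\mathbf{x}$ is exactly $p^{k-1}$, and dividing by $|\mathbb{Z}_p^k| = p^k$ gives probability $1/p = 1/|R|$, which is the $\triangle$-universal bound.

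There is no real obstacle here; the only ingredient beyond bookkeeping is the invertibility of the nonzero coefficient $a_j$, which is exactly where the primality of $p$ is used. This is precisely the step that will fail for a composite modulus $n$ in the generalization GMMH$^*$ announced in the abstract, and it foreshadows why the tight bound there becomes $1/p$ with $p$ the smallest prime divisor of $n$.
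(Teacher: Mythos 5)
Your proof is correct, and it follows the same basic strategy the paper uses: reduce the $\triangle$-universality condition to counting solutions $\mathbf{x}$ of the linear congruence $\sum_{i=1}^k a_i x_i \equiv b \pmod{p}$ with $\mathbf{a} = \mathbf{m}-\mathbf{m}' \neq \mathbf{0}$. The only difference is how that count is obtained. The paper does not prove Theorem~\ref{thm:MMH* UNI} directly; it cites Halevi--Krawczyk and then re-derives it as a corollary of the general-modulus result (Theorem~\ref{thm:GMMH* UNI}), whose solution count comes from Lehmer's Proposition~\ref{Prop: lin cong}: the congruence has $\ell n^{k-1}$ solutions where $\ell = \gcd(a_1,\ldots,a_k,n)$, and $\ell = 1$ when $n$ is prime and $\mathbf{a}\neq\mathbf{0}$. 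You instead establish the count $p^{k-1}$ from scratch by isolating a coordinate with $a_j \not\equiv 0$, using the invertibility of $a_j$ in the field $\mathbb{Z}_p$ to solve uniquely for $x_j$ given the other $k-1$ free coordinates. Your route is more elementary and self-contained for the prime case; the paper's route buys generality, since the same proposition immediately yields the $\frac{1}{p}$-A$\triangle$U bound for arbitrary modulus. Your closing remark correctly identifies invertibility as the step that degrades for composite $n$, which matches the role of $\gcd(a_1,\ldots,a_k,n)$ in the paper's analysis.
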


\section{GMMH$^*$}\label{Sec 2}

Given that, in the definition of MMH$^*$, the modulus is a prime, it is natural to ask what happens if the modulus is an arbitrary integer $n>1$. Is the resulting family still $\triangle$-universal? If not, what can we say about $\varepsilon$-almost-universality or $\varepsilon$-almost-$\triangle$-universality of this new family? This is an interesting and natural problem, and while it has a simple solution (see, Theorem~\ref{thm:GMMH* UNI} below), to the best of our knowledge there are no results regarding this problem in the literature.

First, we define a generalization of MMH$^*$, namely, GMMH$^*$, with the same construction as MMH$^*$ except that we use an arbitrary integer $n>1$ instead of prime $p$.

\begin{definition}\label{def:GMMH$^*$}
Let $n$ and $k$ be positive integers $(n>1)$. The family GMMH$^*$ is defined as follows:
\begin{align}\label{GMMH*}
\text{GMMH}^*:=\lbrace h_{\mathbf{x}} \; : \; \mathbb{Z}_n^k \rightarrow \mathbb{Z}_n \; | \; \mathbf{x}\in \mathbb{Z}_n^k \rbrace,
\end{align}
where
\begin{align}\label{GMMH* for 2}
h_{\mathbf{x}}(\mathbf{m}) := \mathbf{m} \cdot \mathbf{x} \pmod{n} = \sum_{i=1}^k m_ix_i \pmod{n},
\end{align}
for any $\mathbf{x}=\langle x_1, \ldots, x_k \rangle \in \mathbb{Z}_n^k$, and any $\mathbf{m}=\langle m_1, \ldots, m_k \rangle \in \mathbb{Z}_n^k$.
\end{definition}

MMH$^*$ has found important applications, however, in applications that, for some reasons, we have to work in the ring $\Z_n$, the family GMMH$^*$ may be used.

The following result, proved by D. N. Lehmer \cite{LEH2}, is the main ingredient in the proof of 
Theorem~\ref{thm:GMMH* UNI}.

\begin{proposition}\label{Prop: lin cong}
Let $a_1,\ldots,a_k,b,n\in \Z$, $n\geq 1$. The linear congruence $a_1x_1+\cdots +a_kx_k\equiv b \pmod{n}$ has a solution $\langle x_1,\ldots,x_k \rangle \in \Z_{n}^k$ if and only if $\ell \mid b$, where
$\ell=\gcd(a_1, \ldots, a_k, n)$. Furthermore, if this condition is satisfied, then there are $\ell n^{k-1}$ solutions.
\end{proposition}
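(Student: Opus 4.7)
My plan is to reinterpret the statement group-theoretically. Define $\phi : \mathbb{Z}_n^k \to \mathbb{Z}_n$ by $\phi(x_1,\ldots,x_k) = a_1 x_1 + \cdots + a_k x_k \pmod{n}$. This is a homomorphism of finite abelian groups, and the solution set of the congruence is exactly the fiber $\phi^{-1}(b)$. Since every nonempty fiber of a group homomorphism is a coset of the kernel, the proposition reduces to two tasks: identify the image of $\phi$, and compute the size of its kernel.

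For the existence part I would show that $\mathrm{Im}(\phi) = \ell\mathbb{Z}_n$, where $\ell = \gcd(a_1,\ldots,a_k,n)$. One inclusion is immediate: each $a_i$ is a multiple of $\ell$, so every value $\sum a_i x_i \bmod n$ lies in $\ell\mathbb{Z}_n$. For the reverse inclusion, invoke the $(k{+}1)$-variable form of B\'ezout's identity, which gives integers $u_1,\ldots,u_k,v$ with $a_1 u_1 + \cdots + a_k u_k + n v = \ell$; reducing modulo $n$ shows $\ell \in \mathrm{Im}(\phi)$. Since $\mathrm{Im}(\phi)$ is a subgroup of $\mathbb{Z}_n$, it must then contain the entire cyclic subgroup $\ell\mathbb{Z}_n$. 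Consequently the congruence is solvable if and only if $b \in \ell\mathbb{Z}_n$, i.e., if and only if $\ell \mid b$.

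For the counting part, observe that $|\mathrm{Im}(\phi)| = |\ell\mathbb{Z}_n| = n/\ell$. By the first isomorphism theorem (equivalently, because all fibers of $\phi$ over its image have equal cardinality),
\[
|\ker \phi| \;=\; \frac{|\mathbb{Z}_n^k|}{|\mathrm{Im}(\phi)|} \;=\; \frac{n^k}{n/\ell} \;=\; \ell\, n^{k-1},
\]
and every nonempty fiber $\phi^{-1}(b)$ has the same size $\ell n^{k-1}$ as the kernel.

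The only mildly technical step is the $(k{+}1)$-variable B\'ezout identity, which I would derive by a one-line induction from the standard two-variable version, using $\gcd(a_1,\ldots,a_k,n) = \gcd(\gcd(a_1,\ldots,a_{k-1},n),\, a_k)$. Everything else is routine bookkeeping about homomorphisms of finite abelian groups, so I do not expect any real obstacle; the substantive content is precisely the characterization of the image as $\ell\mathbb{Z}_n$.
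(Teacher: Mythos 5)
Your argument is correct. Note, however, that the paper itself gives no proof of this proposition: it is quoted as a classical result of D.~N.~Lehmer (1913) and used as a black box, so there is no in-paper argument to compare yours against. Your group-theoretic route is a clean, self-contained justification: the map $\phi(x_1,\ldots,x_k)=\sum_i a_i x_i \bmod n$ is indeed a homomorphism $\mathbb{Z}_n^k \to \mathbb{Z}_n$, the two inclusions establishing $\mathrm{Im}(\phi)=\ell\mathbb{Z}_n$ are both sound (the forward one uses $\ell \mid a_i$ and $\ell \mid n$, the reverse one the $(k{+}1)$-variable B\'ezout identity plus the fact that the image is a subgroup), and the first isomorphism theorem then gives $|\ker\phi| = n^k/(n/\ell) = \ell n^{k-1}$, with every nonempty fiber a coset of the kernel of that same size. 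This is arguably slicker than the classical elementary treatment (which typically proceeds by induction on $k$, reducing to the one-variable congruence $ax \equiv b \pmod{n}$ and counting solutions directly); what the homomorphism viewpoint buys is that the equal-size-of-fibers claim and the count both fall out of one structural fact rather than a separate counting induction. The only point worth making explicit in a final write-up is that $b \bmod n$ lies in $\ell\mathbb{Z}_n$ if and only if $\ell \mid b$ as integers, which holds because $\ell \mid n$.
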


Now, we are ready to state and prove the following result about $\varepsilon$-almost-$\triangle$-universality of GMMH$^*$.

\begin{theorem} \label{thm:GMMH* UNI}
Let $n$ and $k$ be positive integers $(n>1)$. The family \textnormal{GMMH}$^*$ is $\frac{1}{p}$-\textnormal{A}$\triangle$\textnormal{U}, where $p$ is the smallest prime divisor of $n$. This bound is tight.
\end{theorem}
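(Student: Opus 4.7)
The plan is to translate the event $h_{\mathbf{x}}(\mathbf{m})-h_{\mathbf{x}}(\mathbf{m}')=b$ into a linear congruence in the variables $x_1,\ldots,x_k$ and invoke Proposition~\ref{Prop: lin cong} directly. Fix distinct $\mathbf{m},\mathbf{m}'\in\mathbb{Z}_n^k$ and $b\in\mathbb{Z}_n$, and set $a_i := m_i-m_i'\pmod n$. Then $h_{\mathbf{x}}(\mathbf{m})-h_{\mathbf{x}}(\mathbf{m}')\equiv b \pmod n$ is exactly $a_1x_1+\cdots+a_kx_k\equiv b\pmod n$, and $\mathbf{x}$ is drawn uniformly from $\mathbb{Z}_n^k$, so the relevant probability equals the number of solutions divided by $n^k$.

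Next I would apply Lehmer's result with $\ell=\gcd(a_1,\ldots,a_k,n)$. Either $\ell\nmid b$ (and the probability is $0$, trivially at most $1/p$), or $\ell\mid b$ and there are exactly $\ell n^{k-1}$ solutions, giving probability $\ell/n$. The key observation is that because $\mathbf{m}\neq\mathbf{m}'$, at least one $a_i$ is nonzero in $\mathbb{Z}_n$, which forces $\ell$ to be a proper divisor of $n$. Since the largest proper divisor of $n$ is $n/p$ where $p$ is the smallest prime factor of $n$, we conclude $\ell\leq n/p$, and hence the probability is at most $1/p$. This is the whole almost-$\triangle$-universality argument; there is no real obstacle once Lehmer's counting is in hand.

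For tightness, I would exhibit a concrete bad pair: take $\mathbf{m}=\langle n/p,0,\ldots,0\rangle$, $\mathbf{m}'=\langle 0,0,\ldots,0\rangle$, and $b=0$. Then $a_1=n/p$ and $a_2=\cdots=a_k=0$, so $\ell=\gcd(n/p,n)=n/p$, which divides $b=0$; thus by Proposition~\ref{Prop: lin cong} the congruence has $(n/p)\cdot n^{k-1}$ solutions, and the probability equals exactly $1/p$. This matches the upper bound and completes the proof.

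The only step requiring care is verifying that $\ell$ is indeed a proper divisor of $n$; this uses the fact that the differences $a_i$ are taken modulo $n$ but $\mathbf{m}\neq\mathbf{m}'$ as elements of $\mathbb{Z}_n^k$, so not all $a_i$ vanish mod $n$, hence $n\nmid\gcd(a_1,\ldots,a_k)$ and therefore $\ell=\gcd(a_1,\ldots,a_k,n)<n$. Everything else is immediate from the proposition, and I do not anticipate any genuine difficulty.
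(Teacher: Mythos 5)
Your proposal is correct and follows essentially the same route as the paper: translate the collision event into the linear congruence $a_1x_1+\cdots+a_kx_k\equiv b\pmod{n}$, apply Lehmer's count of $\ell n^{k-1}$ solutions, and bound $\ell=\gcd(a_1,\ldots,a_k,n)$ by $n/p$ using the fact that some $a_i$ is nonzero, with the same tightness witness $a_1=n/p$, $a_{i}=0$ for $i>1$. Your phrasing of the bound via ``the largest proper divisor of $n$ is $n/p$'' is a slightly cleaner way to state what the paper derives from the prime factorization, and your tightness check is a bit more explicit (you specify $b=0$), but the argument is the same.
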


\begin{proof}
Suppose that $n$ has the prime factorization $n=p_1^{r_1}\ldots p_s^{r_s}$, where $p_1 < \cdots < p_s$ are primes and $r_1,\ldots,r_s$ are positive integers. Let $\mathbf{m}=\langle m_1, \ldots, m_k \rangle \in \mathbb{Z}_n^k$ and $\mathbf{m}'= \langle m'_1, \ldots, m'_k \rangle \in \mathbb{Z}_n^k$ be any two distinct messages. Put $\mathbf{a}=\langle a_1,\dots,a_k \rangle = \mathbf{m}-\mathbf{m}'$. For every $b\in \mathbb{Z}_n$ we have
\begin{align*}
h_{\mathbf{x}}(\mathbf{m})-h_{\mathbf{x}}(\mathbf{m'})=b \Longleftrightarrow \sum_{i=1}^k m_ix_i - \sum_{i=1}^k m'_ix_i \equiv b \pmod{n} \Longleftrightarrow \sum_{i=1}^k a_ix_i \equiv b \pmod{n}.
\end{align*}
Note that since $\langle x_1,\ldots,x_k \rangle \in \Z_{n}^k$, we have $n^k$ ordered $k$-tuples 
$\langle x_1, \ldots, x_k \rangle$. Also, since $\mathbf{m}\not=\mathbf{m}'$, there exists some $i_0$ such that $a_{i_0} \not= 0$. Now, we need to find the maximum number of solutions of the above linear congruence over all choices of $\mathbf{a}=\langle a_1,\dots,a_k \rangle \in \mathbb{Z}_n^k \setminus \lbrace \mathbf{0} \rbrace$ and $b\in \mathbb{Z}_n$. By Proposition~\ref{Prop: lin cong}, if $\ell=\gcd(a_1, \ldots, a_k, n) \nmid b$ then the linear congruence $a_1x_1+\cdots +a_kx_k\equiv b \pmod{n}$ has no solution, and if $\ell=\gcd(a_1, \ldots, a_k, n) \mid b$ then the linear congruence has $\ell n^{k-1}$ solutions. Thus, we need to find the maximum of $\ell=\gcd(a_1, \ldots, a_k, n)$ over all choices of 
$\mathbf{a}=\langle a_1,\dots,a_k \rangle \in \mathbb{Z}_n^k \setminus \lbrace \mathbf{0} \rbrace$. Clearly, 
$$
\max_{\mathbf{a}=\langle a_1,\dots,a_k \rangle \in \mathbb{Z}_n^k \setminus \lbrace \mathbf{0} \rbrace}\gcd(a_1, \ldots, a_k, n)
$$
is achieved when $a_{i_0}=p_1^{r_1-1}p_2^{r_2}\ldots p_s^{r_s}=\frac{n}{p_1}$, and $a_i=0$ ($i \not= i_0$). So, we get
$$
\max_{\mathbf{a}=\langle a_1,\dots,a_k \rangle \in \mathbb{Z}_n^k \setminus \lbrace \mathbf{0} \rbrace} 
\gcd(a_1, \ldots, a_k, n)=p_1^{r_1-1}p_2^{r_2}\ldots p_s^{r_s}=\frac{n}{p_1}.
$$
Therefore, for any two distinct messages $\mathbf{m}, \mathbf{m}' \in \mathbb{Z}_n^k$, and all $b\in \mathbb{Z}_n$, we have
\begin{align*}
\text{Pr}_{h_{\mathbf{x}} \leftarrow \text{GMMH}^*}[h_{\mathbf{x}}(\mathbf{m})-h_{\mathbf{x}}(\mathbf{m'})=b] \leq \max_{\mathbf{a}=\langle a_1,\dots,a_k \rangle \in \mathbb{Z}_n^k \setminus \lbrace \mathbf{0} \rbrace} \frac{n^{k-1}\gcd(a_1, \ldots, a_k, n)}{n^k}= \frac{1}{p_1}.
\end{align*}
This means that GMMH$^*$ is $\frac{1}{p_1}$-A$\triangle$U. Clearly, this bound is tight; take, for example, $a_1=\frac{n}{p_1}$ and $a_2=\cdots=a_k=0$.
\end{proof}

\begin{corollary}
If in \textnormal{Theorem~\ref{thm:GMMH* UNI}} we let $n$ be a prime then we obtain \textnormal{Theorem~\ref{thm:MMH* UNI}}.
\end{corollary}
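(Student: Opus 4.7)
The plan is a short specialization argument that invokes Theorem~\ref{thm:GMMH* UNI} and observes that the $\tfrac{1}{p}$-A$\triangle$U bound becomes tight in a way that forces equality, which is the definition of $\triangle$-universality in Theorem~\ref{thm:MMH* UNI}.

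First I would note that when the modulus $n$ equals a prime $p$, the definitions of GMMH$^*$ (Definition~\ref{def:GMMH$^*$}) and MMH$^*$ (Definition~\ref{def:MMH$^*$}) coincide: both range over $\mathbf{x}\in\mathbb{Z}_p^k$ and map $\mathbf{m}\in\mathbb{Z}_p^k$ to $\sum_{i=1}^k m_i x_i\pmod p$. Next I would observe that the smallest prime divisor of a prime $n=p$ is $p$ itself, so Theorem~\ref{thm:GMMH* UNI} specializes to the statement that MMH$^*$ is $\tfrac{1}{p}$-A$\triangle$U, i.e., for any two distinct $\mathbf{m},\mathbf{m}'\in\mathbb{Z}_p^k$ and every $b\in\mathbb{Z}_p$,
\[
\mathrm{Pr}_{h_{\mathbf{x}}\leftarrow \mathrm{MMH}^*}\bigl[h_{\mathbf{x}}(\mathbf{m})-h_{\mathbf{x}}(\mathbf{m}')=b\bigr]\leq \tfrac{1}{p}.
\]

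The key upgrade from ``$\leq$'' to ``$=$'' is a counting/averaging remark. For fixed distinct $\mathbf{m},\mathbf{m}'$, the events $\{h_{\mathbf{x}}(\mathbf{m})-h_{\mathbf{x}}(\mathbf{m}')=b\}$ as $b$ ranges over $\mathbb{Z}_p$ are disjoint and exhaustive, so their probabilities sum to $1$. Since there are exactly $p$ values of $b$ and each probability is at most $\tfrac{1}{p}$, each must equal $\tfrac{1}{p}$ exactly. That is precisely the $\triangle$-universal condition of Definition~\ref{def:Uni hash}(ii) with $|R|=p$, yielding Theorem~\ref{thm:MMH* UNI}.

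There is no real obstacle; the only subtlety is the final ``$\leq\Rightarrow=$'' step, which is a clean pigeonhole-style observation that depends on the range having size exactly $p$ (the same prime appearing in the bound). I would emphasize this in one sentence so the reader sees why the specialization does not just give a weaker almost-universal statement but actually recovers the exact $\triangle$-universality of MMH$^*$.
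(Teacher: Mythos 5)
Your proof is correct, but it takes a genuinely different route from the paper's. The paper goes back inside the proof of Theorem~\ref{thm:GMMH* UNI}: when $n$ is prime, $\gcd(a_1,\ldots,a_k,n)=1$ for every nonzero $\mathbf{a}$, so Proposition~\ref{Prop: lin cong} gives exactly $n^{k-1}$ solutions for \emph{every} $b$, hence the probability is exactly $\frac{1}{n}=\frac{1}{p}$ with no inequality ever appearing. You instead treat Theorem~\ref{thm:GMMH* UNI} as a black box and upgrade its ``$\leq\frac{1}{p}$'' to ``$=\frac{1}{p}$'' by the averaging observation that the $p$ disjoint, exhaustive events $\{h_{\mathbf{x}}(\mathbf{m})-h_{\mathbf{x}}(\mathbf{m}')=b\}$ have probabilities summing to $1$, so each bound of $\frac{1}{p}$ must be attained. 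Both arguments are sound. Your version has the merit of isolating a general fact --- any $\frac{1}{|R|}$-A$\triangle$U family is automatically $\triangle$-universal, regardless of how the bound was obtained --- and of literally deducing the corollary from the \emph{statement} of Theorem~\ref{thm:GMMH* UNI} rather than from its proof; the paper's version is shorter because the exact count from Lehmer's proposition is already sitting in the proof of Theorem~\ref{thm:GMMH* UNI} and only needs to be specialized. The one thing worth stating explicitly in your write-up is the (easy) fact that the difference $h_{\mathbf{x}}(\mathbf{m})-h_{\mathbf{x}}(\mathbf{m}')$ always lands in $\mathbb{Z}_p$, which is what makes the events exhaustive.
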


\begin{proof}
When $n$ is prime, $\gcd_{\mathbf{a}=\langle a_1,\dots,a_k \rangle \in \mathbb{Z}_n^k \setminus \lbrace \mathbf{0} \rbrace}(a_1, \ldots, a_k, n)=1$, so we get $\triangle$-universality.
\end{proof}

We remark that if in the family GMMH$^*$ we let the keys $\mathbf{x}=\langle x_1, \ldots, x_k \rangle \in \mathbb{Z}_n^k$ satisfy the general conditions $\gcd(x_i,n)=t_i$ ($1\leq i\leq k$), where $t_1,\ldots,t_k$ are given positive divisors of $n$, then the resulting family, which was called GRDH in \cite{BKSTT2}, is no longer `always' $\varepsilon$-A$\triangle$U. In fact, it was shown in \cite{BKSTT2} that the family GRDH is $\varepsilon$-A$\triangle$U for some $\varepsilon<1$ if and only if $n$ is odd and 
$\gcd(x_i,n)=t_i=1$ (that is, $x_i \in \mathbb{Z}_n^{*}$) for all $i$. Furthermore, if these conditions are satisfied then GRDH is $\frac{1}{p-1}$-A$\triangle$U, where $p$ is the smallest prime divisor of $n$ (this bound is also tight). This result is then applied in giving a generalization of a recent authentication code with secrecy. A key ingredient in the proofs in \cite{BKSTT2} is an explicit formula for the number of solutions of restricted linear congruences (a restricted version of Proposition~\ref{Prop: lin cong}), recently obtained by Bibak et al. \cite{BKSTT}, using properties of Ramanujan sums and of the finite Fourier transform of arithmetic functions.

\section*{Acknowledgements}

The authors would like to thank the editor and anonymous referees for helpful comments. During the preparation of this work the first author was supported by a Fellowship from the University of Victoria (UVic Fellowship).

\end{document}